%% file: main.tex
\documentclass[%
 reprint,
 amsmath,amssymb,
 aps,
]{revtex4-2}

\usepackage{enumitem}
\usepackage{graphicx}
\usepackage{subcaption}
\usepackage{amsmath}
\usepackage{amsthm}
\usepackage{physics}
\usepackage[normalem]{ulem}
\usepackage{mathtools}
\usepackage{tikz}
\usepackage{dcolumn}
\usepackage{bm}
\usepackage[colorlinks=true,
            linkcolor=blue,
            citecolor=blue,
            urlcolor=blue]{hyperref}
\usepackage{ragged2e}
\newtheorem{theorem}{Theorem}
\allowdisplaybreaks
\usetikzlibrary{quantikz2}

\graphicspath{...\graphs}

\begin{document} 

\title{Trade-off between predictability and quantum coherence for multi-path interferometry and its operational interpretation}

\author{Ezra Acalapati}
 \email{acalapati.ezra@phys.ens.fr}
\affiliation{%
 Laboratoire de Physique de l’\'{E}cole Normale Sup\'{e}rieure, ENS, Universit\'{e} PSL, CNRS, Sorbonne Universit\'{e}, Universit\'{e} de Paris, F-75005 Paris, France
}


\begin{abstract}

The complementarity principle is a cornerstone of quantum mechanics. In this work, we investigate a complementarity relation between the predictability and quantum coherence, respectively, representing the particle-like and wave-like behaviour in wave--particle duality, in a multi-path setup. We introduce a basis-dependent predictability defined by the Bures distance between the state dephased in the chosen basis and the maximally mixed state. The predictability depends only on the observed basis statistics and admits a closed form in terms of the Bhattacharyya overlap. We derive a trade-off relation between the  predictability and a coherence measure defined based on the nonclassicality of the Kirkwood--Dirac quasiprobability. For pure states, the trade-off relation is an exact equality.
Remarkably, this wave--particle duality relation endows the coherence measure with an operational interpretation as the classically irreducible part of measurement randomness, yielding a tight worst-case bound on the guessing probability in source-independent QRNG.

\end{abstract}
	
	\maketitle
	\flushbottom

\section{Introduction}
\label{sec: IntWPD}

Bohr's complementarity principle serves as a fundamental tenet in quantum theory~\cite{Bohr:1928}. It underlies the wave--particle duality in the double-slit experiments ~\cite{Wootters:1979, Greenberger:1988aih, Englert:1996zz}. The principle asserts that it is impossible to observe the particle-like and wave-like behaviour of the system simultaneously, as expressed quantitatively, e.g. in Englert-Greenberger-Yasin trade-off relation between path-distinguishability and interference visibility  ~\cite{Greenberger:1988aih, Englert:1996zz}. This duality has been experimentally observed in diverse platforms including trapped atoms~\cite{Wang:2016mbg}, photonic interferometers~\cite{PhysRevLett.119.040401, Yuan:2018qwy, Gao:2018pmj, Yoon:2021dhw, Chen:2022dqt}, a superconducting circuit~\cite{Huang:2022fdp} and a nanophotonic quantum chip~\cite{Ding:2025}. Beyond its fundamental theoretical significance, complementarity has recently emerged as a core element in quantum cryptographic security proofs and protocols ~\cite{Koashi:2005nki, Mizutani:2017rzz, Gao:2021gzc, Spegel-Lexne:2024mmz, Raj:2025ypf}. Furthermore, related complementarity-based frameworks have found critical applications in quantum sensing, particularly within induced-coherence and undetected-photon imaging protocols ~\cite{Kalashnikov:2015bba, Paterova:2017vxw, Kutas:2019kng, Kviatkovsky:2020gku, Gemmell:2022qeo}.

It is natural to expect that the wave--particle duality extends to multi-path interferometry ~\cite{Sorkin:1994dt, Sinha:2010mwa, Qureshi:2017ytw, Roy:2019oyy, Qureshi:2019xrb, Giordani2023IntegratedPhotonics}. In this setup, interference visibility and path-distinguishability no longer have analytical forms. Many attempts have been made to use different quantities to capture the wave-like and particle-like behaviour. An intuitive approach is to use quantum coherence \cite{Streltsov:2016iow} to quantify wave-like behaviour ~\cite{Bera:2015wbe}. Other approaches derive complementarity relations via entropic bounds~\cite{Coles:2016yrt, Bagan:2020}, optimal which-path discrimination alongside mutual information~\cite{Melo:2026nix}, incoherent operations~\cite{Liu:2024oxh} and metric/geometry-based approaches~\cite{Yang:2024amw}. Relatedly, several works extend duality to triality by including mixedness as an additional term~\cite{Jakob:2007zz, Qian:2018, Qian:2020, Basso:2021hjj, Maziero:2022dkw, Ding:2025}.

A recent approach to study wave--particle duality is to frame it as an information-theoretic guessing game ~\cite{Coles:2016yrt, Bagan:2017dtn}. In this scenario, an agent attempts to extract the certainty in getting particle-like (which-path) information with the wave--particle duality relation imposing fundamental limits on the performance. This operational interpretation of the wave--particle duality in terms of a guessing game is particularly relevant for source-independent QRNG, where the source is untrusted while the measurement device is assumed to be trusted ~\cite{Cao:2016vsg, Ma:2019aol}. In this scenario, the duality relation can be turned into a worst-case randomness bound.

In this work, we shall formulate a multi-path wave--particle duality relation which admits direct operational interpretation in such a guessing game. First, given a reference basis, we introduce predictability as a measure of particle-like behaviour in the multi-slit interferometry. It is defined as the Bures distance between a dephased state with respect to the basis and the maximally mixed state. We show that the predictability is exactly complementary to a coherence measure for quantifying the wave-like behaviour that is defined in terms of the nonclassical values of the Kirkwood--Dirac quasiprobability \cite{Budiyono:2023riv,Budiyono:2023vcn}. The duality relation suggests an operational interpretation of the coherence measure as the part of the measurement uncertainty that cannot be removed by revealing any classical preparation label. This further leads naturally to bounds on guessing probability in a protocol for source-independent QRNG with classical side information.

\section{Predictability and Coherence}
\label{sec: PreCoh}

\subsection{Predictability from Bures distance between dephased state and maximally mixed state}
\label{subsec: BurDisPreMea}

Predictability is a classical \textit{a priori} notion. It quantifies how sharply peaked the outcome statistics are for a chosen measurement basis~\cite{Roy:2024lmo, Basso:2021eax, Maziero:2022dkw}. Consider a $d$-path experiment. Suppose that a particle traverses path $a$ with probability $p_a$, $a=1,\dots d$. If the distribution is uniform ($p_a=1/d$), no path is preferred, and the path is maximally uncertain and hence predictability is minimal. At the opposite extreme, if one path occurs with certainty ($p_{a^\ast}=1$), the path is fully determined, and predictability is maximal. A predictability quantifier is required to measure the gap between maximal and minimal predictability.

A natural way to quantify this gap between `certain' and `uncertain' outcome statistics is via a distance between probability distributions. The Bhattacharyya overlap (equivalently, the Bhattacharyya coefficient) captures how distinguishable two distributions are~\cite{Bhattacharyya:1943, Fuchs:1997ss} as it is directly tied to statistical distinguishability. Its quantum analogue is fidelity, which leads to the concept of the Bures distance, which lets us lift this statistical notion of distinguishability directly to quantum states, in the spirit of earlier geometric approaches based on trace- and Hellinger-type distances~\cite{Roga:2016, Markham:2018, Puchala:2016}.

In order to promote the classical idea to quantum, the predictability needs to be basis-dependent. For a quantum state $\rho$ and projective measurement basis $\{\Pi_a\}$, it should depend only on the observed measurement statistics $p_a=\Tr(\rho\Pi_a)$. We therefore evaluate it on the dephased state
$\Delta_a(\rho)=\sum_a \Pi_a \rho \Pi_a$,
which erases phases while leaving $\{p_a\}$ unchanged~\cite{Durr:2001, Englert:2008}. We then define predictability as the Bures distance between $\Delta_a(\rho)$ and the maximally mixed state for a $d$-dimensional system,
\begin{equation}
    \label{eq: BurDisPreMea}
    P(\rho, \{\Pi_a\}) \coloneqq D_B^2 (\Delta_a(\rho) \| I/d),
\end{equation}
where $P(\rho, \{\Pi_a\})$ quantifies the predictability of an arbitrary state $\rho$ with respect to the projective measurement ${\Pi_a}$ and the (squared) Bures distance $D_B^2(\rho \| \sigma) = 2 \left( 1- \sqrt{F(\rho,\sigma)}\right)$, which includes the Uhlmann fidelity $F(\rho,\sigma) = \left( \Tr \sqrt{\sqrt{\rho}~\sigma\sqrt{\rho}} \right)^2$.
Because $\Delta_a(\rho)$ is diagonal in the chosen basis, Eq. \eqref{eq: BurDisPreMea} acquires a simple form in terms of the classical probability distribution $\{ p_a \}$:
\begin{equation}
    \label{eq: BurDisPreMeaNum}
    D_B^2 (\Delta_a(\rho) \| I/d) = 2 \left( 1 - \frac{1}{\sqrt{d}} \sum_a \sqrt{p_a} \right).
\end{equation}
Meanwhile, the maximal distance happens when only a single outcome is possible. This occurs when $\rho$ is an eigenstate of ${\Pi_a}$, giving $2 \left( 1 - \frac{1}{\sqrt{d}} \right)$. 
Finding the ratio between the  predictability and its maximal value yields the expression for a normalised predictability,

\begin{equation}
    \label{eq: BurDisPreMeaAlt2}
    \widetilde P(\rho, \{\Pi_a\}) \coloneqq \frac{P(\rho, \{\Pi_a\})}{\max_\rho P(\rho, \{\Pi_a\})} = \frac{\sqrt{d} - \sum_a \sqrt{p_a}}{\sqrt{d} - 1}.
\end{equation}

Following the framework introduced in ~\cite{Durr:2001, Englert:2008, Basso:2021eax} for a \textit{resource theory of predictability} (RTP), the  predictability needs to satisfy several properties. This classical monotone has the natural properties of a certainty measure: (1) maximum when certain (i.e. $p_{a^*} = 1$), (2) minimum when maximally uncertain (i.e. $p_a = 1/d$), (3) invariant under basis permutation since it is expressed in terms of the dephased state $\Delta_a(\rho)$ and (4) monotonicity under mixing (see App. \ref{app: Pro56}).

\subsection{General Geometric Coherence from Pure-State Coherence}
\label{subsec: PreNCLKD}

Coherence is a basis–dependent manifestation of quantum superposition. A coherence quantifier should therefore vanish on $\mathcal I$ and increase as the off–diagonal (superposition) elements with respect to ${\Pi_a}$ grow. This basis-relative viewpoint is standard in the modern resource theory of coherence~\cite{Baumgratz:2013ecx, Streltsov:2016iow, WinterYang2016OperationalCoherence} and will be our backdrop for turning the geometric predictability of section \ref{subsec: BurDisPreMea} into a genuine coherence monotone~\cite{Streltsov:2016iow}.

One of the forms of coherence is the \textit{geometric coherence},
which is defined using \textit{fidelity} $F$. For illustration, an example of geometric coherence is $C_g(\rho) = 1 - \max_{\sigma \in \mathcal{I}} F(\rho, \sigma)$, where $\mathcal{I}$ is the set of incoherent states~\cite{Streltsov:2016iow, Indrajith:2021ybu, Lu_2025}. The expression finds the optimal incoherent state such that it gives the maximal value. Another common expression is by encapsulating fidelity with a square root~\cite{Streltsov:2016iow, Liu:2017eof}.
These expressions use the negative proportion of fidelity, and the reason behind it is that fidelity is concave, which gives the convexity requirement for the coherence. In addition, the geometric coherences above obey the monotonicity property of coherence.

The geometric  predictability in section \ref{subsec: BurDisPreMea} is related to the Kirkwood–-Dirac (KD) coherence, which follows the recent quasiprobability-based approach to coherence~\cite{Budiyono:2023riv, Budiyono:2023vcn},
in its pure-state form for a $d$-dimensional pure state $\ket{\psi}$ and projective measurement basis $\{ \Pi_a \}$,
\begin{equation}
    \label{eq: KDCohPur}
    C^{NCl}_{KD}(\ket{\psi}, \{\Pi_a\}) = -1 + \sum_{a}\sqrt{|\bra{a}\ket{\psi}|^2}.
\end{equation}
The coherence is expressed as the overlap between a basis state and the target state, and it is no other than the definition of fidelity for pure states.

We offer a new geometric coherence that is positively proportional to fidelity, with its convexity guaranteed by the \textit{convex–roof construction}~\cite{Uhlmann:2010, Gour:2024znp}. This definition admits interpretations that are not explicit in the standard formulation of geometric coherences.
In this method, we decompose the target state into an arbitrary number of pure states $\rho = \sum_i p_i \ket{\psi_i}\bra{\psi_i}$, and the resulting coherence is the minimum of the average of the coherence on each pure state weighted by $p_i$. This gives the following expression:
\begin{equation}
    \label{eq: ConRooCoh}
    C(\rho, \{\Pi_a\}) = \inf_{\{p_i, \ket{\psi_i}\}} \sum_{i} p_i C(\ket{\psi_i}, \{\Pi_a\}).
\end{equation}
We can also define the normalised coherence
\begin{equation}
    \label{eq: CohMeaAlt}
    \widetilde C(\rho, \{\Pi_a\}) \coloneqq \frac{C^{NCl}_{KD}(\rho, \{\Pi_a\})}{\max_{\rho} C^{NCl}_{KD}(\rho, \{\Pi_a\})} = \frac{C^{NCl}_{KD}(\rho, \{\Pi_a\})}{\sqrt{d} - 1},
\end{equation}
where we have used $\max_{\rho} C^{NCl}_{KD}(\rho, \{\Pi_a\}) = \max_{\ket{\psi}} C^{NCl}_{KD}(\ket{\psi}, \{\Pi_a\}) = \sqrt{d} - 1$.
The values of the normalised mixed-state coherence $\widetilde C(\rho, \{\Pi_a\})$ for a qubit with respect to the polar angle $\theta$ and different radii are shown alongside those of the $\ell_1$-coherence $C_{l_1}(\rho, \{\Pi_a\})$ in Fig. \ref{fig:cohplot}.
In order for it to be a proper coherence monotone, it needs to satisfy the following properties \cite{Budiyono:2023riv, Budiyono:2023vcn} (for proofs see App. \ref{app: ProCoh}): (1) Convexity, which is automatically ensured by the nature of convex–roof construction $ C(\rho,\{\Pi_a\}) \leq \sum_i p_i C(\rho_i,\{\Pi_a\}) $ for $\rho = \sum_i p_i \rho_i$; (2) Faithfulness, where the coherence is maximum when the target is in the set of maximally coherent states for a given basis $\{\Pi_a\}$, $\ket{\psi_{mc}} = \frac{1}{\sqrt{d}} \sum_a e^{i\theta_a} \ket{a}$, then $C(\ket{\psi_{mc}},\{\Pi_a\}) = \max C(\ket{\psi},\{\Pi_a\})$ and minimum when it is in the set of incoherent states, $\rho_I = \sum_a p_a \Pi_a$, then $C(\rho_I,\{\Pi_a\}) = 0$; (3) Non-increasing under a decoherence operation $\Theta_\lambda (\rho, \{ \Pi_a \}) \coloneqq \lambda \rho + (1-\lambda) \Delta_a(\rho)$, where $ C(\Theta_\lambda(\rho,\{\Pi_a\})) \leq C(\rho,\{\Pi_a\}) $; (4) Non-increasing under partial trace, where $ C(\rho_{12}, \{\Pi_a \otimes I\}) \geq C(\rho_{1}, \{\Pi_a\}) $ with $\rho_1 = \text{Tr}_2~ \rho_{12} $; (5) Unitarily covariant, where $ C(U\rho U^\dagger, \{U \Pi_a U^\dagger\}) = C(\rho, \{ \Pi_a \}) $; (6) Invariant under unitary transformation that commutes with a Hermitian operator $A$ that has $\{\Pi_a\}$ as the eigenvectors (i.e. $A = \sum_a a \ket{a}\bra{a}$ with $a \in \mathbb{R}$), i.e. $ C(U\rho U^\dagger, \{ \Pi_a \}) = C(\rho, \{ \Pi_a \}) $ with $[A, U] = 0$ and (7) Invariant under permutation of incoherent basis.

\begin{figure}[!tbp]
    \centering
    \includegraphics[scale=0.55]{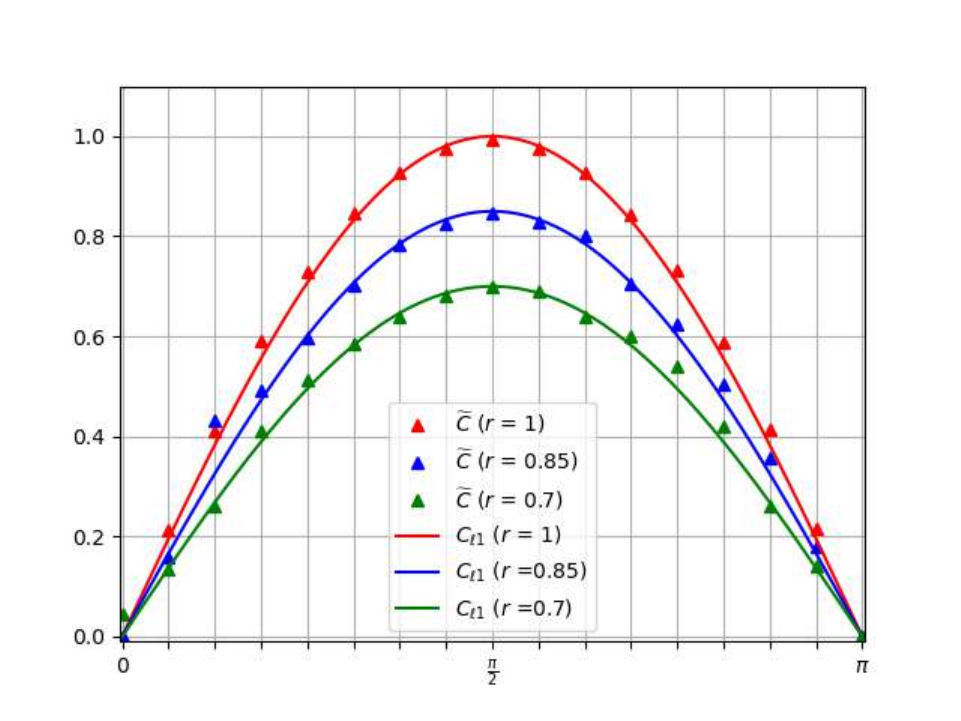}
    \caption{\justifying Plots of the normalised coherence $\widetilde{C}$ and $\ell_1$-coherence $C_{\ell_1}$ of a single qubit as a function of $\theta$ for radii of the density matrices $r = 1, 0.85, 0.7$. The coherence plot is obtained by solving the associated convex optimisation problem using \texttt{scipy.optimize.minimize()} in Python.}
    \label{fig:cohplot}
\end{figure}

\section{Complementarity Relation between  Predictability and Coherence}
\label{sec: CRBurConv}

\subsection{Pure States}
\label{subsec: ConRoo}
 Following the definition in Eq. \eqref{eq: BurDisPreMea}, Eq. \eqref{eq: KDCohPur} can also be written in terms of the  predictability,
\begin{equation}
    \label{eq: KDCohPurBur}
    \widetilde C(\ket{\psi}, \{\Pi_a\}) = 1 - \widetilde P(\ket{\psi}, \{\Pi_a\}).
\end{equation}
In order to obtain a geometrical meaning, first, we
consider a set of the states $\ket{\psi_{mc}}$ that contain the most coherence states with respect to $\{ \Pi_a \}$ 
, as opposed to the set of incoherent states of $\{\Pi_a\}$, $\rho_I$
. This leads to having the ability to write $I/d$ in terms of the set $\Delta_a(\ket{\psi_{mc}}) = I/d$. Thus Eq. \eqref{eq: KDCohPurBur} can also be written explicitly as
\begin{eqnarray}
    \label{eq: KDCohPurBurEle}
    &&\frac{C^{NCl}_{KD}(\ket{\psi}, \{\Pi_a\})}{\max_{\ket{\psi}} C^{NCl}_{KD}(\ket{\psi}, \{\Pi_a\})}\nonumber\\
    &&= 1 - \frac{D_B^2 (\Delta_a(\ket{\psi}) \| \Delta_a(\ket{\psi_{mc}}))}{\max_{\ket{\psi}} D_B^2(\Delta_a(\ket{\psi}) \| \Delta_a(\ket{\psi_{mc}}))}.
\end{eqnarray}
This relation provides a quantitative expression of Bohr's uncertainty
or the wave--particle duality relation for arbitrary dimensional quantum systems in the case of pure states. It shows how the existence of coherence complements and limits the which-way information to predict with certainty the outcomes of our system. Furthermore, the inclusion of the dephasing operation in the expression of coherence avoids the need to choose which state in the incoherent-state set as the reference state, as illustrated in Fig. \ref{fig:cohvis}.

\begin{figure}[!bp]
        \centering
        \includegraphics[scale=0.15]{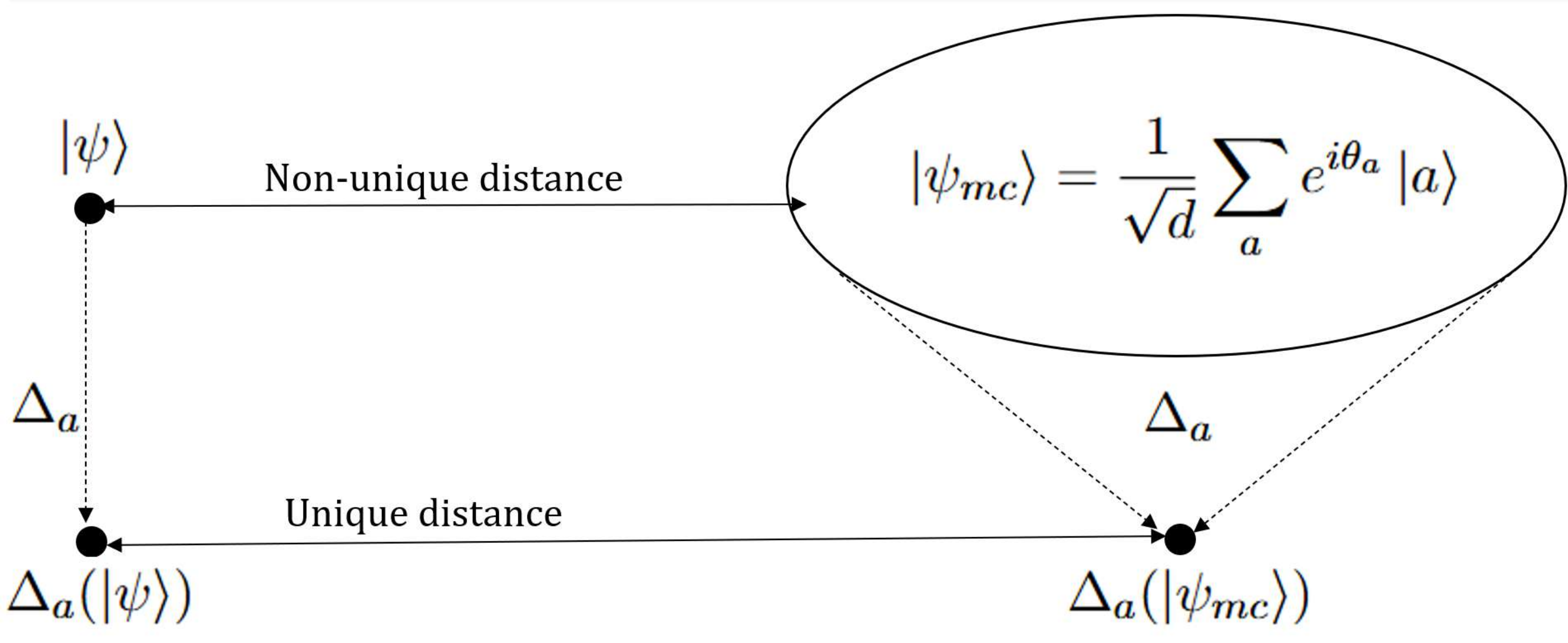}
        \caption{\justifying This figure shows the geometric interpretation of defining the coherence monotone as a distance between the dephased target state $\Delta_a(\ket{\psi})$ and the dephased maximally coherent state $\Delta_a(\ket{\psi_{mc}})$. The measure between the target state $\ket{\psi}$ and the set of maximally coherent states is not unique since there are infinitely many states within the set (as seen from the parameters $\theta_a$). However, the measure between the dephasing channels of both references is unique.}
        \label{fig:cohvis}
\end{figure}

\subsection{Mixed States}
\label{subsec: NKDWPD}

From this form of coherence, we can intrinsically derive 
a wave–particle duality relation for the general case of mixed states via \textit{convex-roof construction}. 
It has the form expressed below,
\begin{equation}
    \label{eq: WavParBurCoh}
    \widetilde{C}(\rho, \{\Pi_a\}) + \widetilde{P}(\rho, \{\Pi_a\}) \leq 1,
\end{equation}
where the normalised coherence $\widetilde C (\rho, \{\Pi_a\})$ represents wave-like character, while the normalised predictability represents particle-like character.
The proof is contained in App. \ref{app: Wpd}.

Eq. \eqref{eq: WavParBurCoh} has the same complementarity structure as the conventional wave--particle duality relations in interferometry, where a particle-like quantifier (e.g.\ path predictability or which-way distinguishability) trades off against a wave-like quantifier (e.g.\ fringe visibility) \cite{Wootters:1979, Greenberger:1988aih, Englert:1996zz}.
In our measurement-centric formulation, the particle term is the basis predictability of the dephased statistics, quantified geometrically via the Bures distance \(P(\rho,\{\Pi_a\}) \) as in Sec. \ref{subsec: BurDisPreMea}, while the wave term is a basis-relative coherence constructed from the KD coherence and extended to mixed states by the convex roof as in Sec. \ref{subsec: PreNCLKD} \cite{Bera:2015wbe}. For pure states, these two notions are tightly linked: the normalised KD coherence is exactly the complement of the normalised predictability, cf.\ Eq. \eqref{eq: KDCohPurBur}, so the duality is saturated in that case. For mixed states, the convex-roof extension optimises over classical decompositions, and the resulting trade-off becomes the inequality in Eq. \eqref{eq: WavParBurCoh}, mirroring the standard observation that classical mixing and decoherence generally prevent simultaneous maximisation of wave- and particle-like behaviour.

\section{Operational Interpretation}
\label{sec: OpInt}

\subsection{Coherence as Classical Irreducible Randomness}
\label{subsec: QuaGame}

The mixed-state trade-off derived above suggests a direct operational reading of the convex-roof coherence. Instead of viewing the optimisation only as a mathematical construction, we may interpret each pure-state decomposition of $\rho$ as a possible classical preparation strategy, where the decomposition label is available as side information. In this setting, the coherence quantifies the part of the measurement uncertainty that remains after the best classical explanation of the source has been taken into account.

The previous wave--particle duality relation can be alternatively rewritten as
\begin{equation}
    \label{eq: PreCohInt}
    \frac{C(\rho, \{\Pi_a\})}{\max_{\rho} C(\rho, \{\Pi_a\})}  =  1 - \sup_{\{p_i, \ket{\psi_i}\}} \sum_{i} p_i \frac{P(\ket{\psi_i}, \{\Pi_a\})}{\max_{\ket{\psi}} P(\ket{\psi}, \{\Pi_a\})}.
\end{equation}
Written in this form, the wave--particle duality relation admits an operational setting with a procedure in analogy with the \textit{quantum guessing task with posterior information}~\cite{Carmeli:2021jho} and \textit{assisted coherence distillation}~\cite{Zhao:2021org}, which gives a statistical interpretation of the convex-roof construction approach. The procedure is as follows.

\begin{enumerate}[label=(\roman*), leftmargin=0pt, itemindent=*, align=left]
    \item \textbf{Alice chooses a decomposition.}
    Alice wants to communicate a quantum state $\rho$, but she does not transmit
    $\rho$ directly. Instead, she chooses a pure-state ensemble
        $\rho = \sum_i p_i \ket{\psi_i}\bra{\psi_i}$,
    where the classical label $i$ is chosen with probability $p_i$.

    \item \textbf{Alice sends a labelled pure state.}
    In each round, Alice samples a label $i$ according to $p_i$ and transmits the
    corresponding pure state $\ket{\psi_i}$ to Bob. The label $i$ is treated as
    classical side information associated with the preparation.

    \item \textbf{Bob measures in a fixed basis.}
    Bob receives $\ket{\psi_i}$ and performs the fixed projective measurement
    $\{\Pi_a\}$. The outcome $a$ is obtained with probability
        $p(a|i) = \Tr\!\left[\Pi_a \ket{\psi_i}\bra{\psi_i}\right]$.
    Thus, the measurement probes how definite the transmitted state is with
    respect to the basis $\{\Pi_a\}$.

    \item \textbf{Bob uses the label to assess predictability.}
    Given the classical label $i$, Bob knows which label was transmitted.
    The certainty of the measurement outcome for this round is quantified by
    the predictability
       $ P(\ket{\psi_i},\{\Pi_a\})$.
    This quantity is maximal when the measurement outcome is certain, and
    vanishes when the outcome is no better than a random guess in the chosen
    basis.

    \item \textbf{Bob evaluates the average certainty of the ensemble.}
    For a fixed decomposition of $\rho$, the overall predictability available to
    Bob is the average
        $\sum_i p_i\,P(\ket{\psi_i},\{\Pi_a\})$.
    This measures how much of the measurement randomness can be removed once
    the classical preparation label is known.

    \item \textbf{Alice and Bob optimise over all decompositions.}
    Since the same mixed state $\rho$ may admit many pure-state decompositions,
    Alice and Bob may choose the decomposition that makes the average
    predictability as large as possible:
        $\sup_{\{p_i,\ket{\psi_i}\}}
        \sum_i p_i\,P(\ket{\psi_i},\{\Pi_a\})$.
\end{enumerate}

From the above illustration, our quantity operationally measures the part of the measurement unpredictability that cannot be removed by conditioning on any classical label specifying a pure-state decomposition of $\rho$. We refer to this residual unpredictability as \textit{classically irreducible randomness}. This is similar to the concept of \textit{intrinsic randomness}, where it is generated purely by the quantum nature of the state, as opposed to \textit{extrinsic randomness}, which can be removed by knowing hidden variables and classical conditioning~\cite{Dai:2022feb}.

The coherence, as shown in Eq. \eqref{eq: PreCohInt}, is expressed as the complement of the maximum of the average post-processed predictability. This leads to an interpretation that
follows the physical intuitions for the two extreme cases: \textit{maximally coherent states} and \textit{incoherent states}. For maximally coherent states, one starts with a maximally coherent (with respect to the measurement basis) quantum state $\rho_{mc}$, which is also a pure state, and pure states cannot be decomposed non-trivially. Thus, there is only one possible way of transmitting the state. Since it is maximally coherent, there is no hint of the state in the chosen basis and the outcome can only be guessed randomly. Meanwhile, for an incoherent state $\rho_I$, a decomposition can be chosen such that the pure states lie within the measurement basis. With this information, the measurement outcome can be certain for each transmission. Looking at it another way, turning on coherence with respect to a certain basis will disturb one's ability to predict the state during measurement using the same basis.

To illustrate the interpretation, consider the following examples for a single qubit with measurement in the computational basis. According to Carath\'eodory's theorem, for a single qubit, generally, at most four pure states are needed in the decomposition, however, for illustrative purposes, we will consider two pure states.
\begin{enumerate}[wide, labelwidth=!, labelindent=0pt]
    \item The maximally mixed state $\rho = I/2$

    This is the classically maximal random state. Let the measurement be performed in the computational basis. One possible pure-state decomposition is
    \begin{equation*}
        \rho = \frac{1}{2}\ket{+}\bra{+} + \frac{1}{2}\ket{-}\bra{-},
    \end{equation*}
    where $\ket{\pm} = \frac{1}{\sqrt{2}} \left( \ket{0} \pm \ket{1} \right)$. If this decomposition is chosen, the average predictability with respect to the computational basis is zero because $\ket{+}$ and $\ket{-}$ are maximally unpredictable with respect to $\ket{0}$ and $\ket{1}$, i.e.
    \begin{equation*}
        P(\ket{\pm}, \{\ket{0},\ket{1}\}) = 0.
    \end{equation*}
    This means the resulting ensemble is fully random. However, this decomposition hides the classical origin of randomness. The same density matrix admits the alternative decomposition
    \begin{equation*}
        \rho = \frac{1}{2}\ket{0}\bra{0} + \frac{1}{2}\ket{1}\bra{1}.
    \end{equation*}
    If this decomposition is instead prepared, the outcome is perfectly predictable for every run, i.e.
    \begin{equation*}
        P(\ket{0(1)}, \{\ket{0},\ket{1}\}) = P_{max} = 2 \left( 1-\frac{1}{\sqrt{2}} \right),
    \end{equation*}
    which leads to the average predictability attaining its maximum value. This shows that the apparent randomness of the maximally mixed state is not quantum, as it can be \textit{classically reduced} by choosing an appropriate pure-state decomposition (and revealing the corresponding classical label). In our framework, this is reflected by 
    \begin{equation*}
        C(\rho, \{\ket{0},\ket{1}\}) = 0,
    \end{equation*}
    since the supremum of the average predictability is reached by decomposing the state into $\ket{0}$ and $\ket{1}$.
    
    \item The maximally coherent state $\rho = \ket{+}\bra{+}$

    The resulting average predictability of this explicit expression is zero, and, since for a pure state there is no non-trivial decomposition, there is no alternative decomposition to reduce and `explain away' the randomness by classical side information. Thus, in this sense, the randomness is purely generated by the quantumness of the state and is \textit{classically irreducible}. In our framework, this becomes
    \begin{equation*}
        C(\rho, \{\ket{0},\ket{1}\}) = C_{max} = \sqrt{2}-1,
    \end{equation*}
    since the supremum of the average predictability remains zero for this state.

    \item Consider the following state,

    \begin{equation*}
        \rho = \begin{pmatrix}
            3/4 & 1/4\\
            1/4 & 1/4
        \end{pmatrix}.
    \end{equation*}

    This state can be decomposed as
    \begin{equation*}
        \rho = \frac{1}{2} \ket{\psi_+} \bra{\psi_+} + \frac{1}{2} \ket{\psi_-} \bra{\psi_-},
    \end{equation*}
    where $\ket{\psi_\pm} = \frac{\sqrt{3}}{2} \ket{0} + \frac{1}{2} e^{\pm i\arctan\sqrt{2}} \ket{1}$. Note that $1 - P(\ket{\psi_+}) / P_{max} = 1 - P(\ket{\psi_-}) / P_{max} = 0.366$, thus giving the average unpredictability $1 - P_{avg} = 0.366$. Another way to decompose it is
    \begin{equation*}
        \rho = \frac{1}{2} \ket{+}\bra{+} + \frac{1}{2} \ket{0}\bra{0},
    \end{equation*}
    which gives $1 - P_{avg} / P_{max} = 0.207$. This illustrates two key aspects of the convex-roof construction. First, the same mixed state can be realised by ensembles with very different uses of classical side information: in the first decomposition, the label $\{\ket{\psi_\pm}\}$ is less useful for predicting the measurement outcomes since 
    the normalised predictability complement is larger than in the latter decomposition. The label $\{\ket{+}, \ket{0}\}$ explains the apparent randomness by identifying runs that are predictable $\{\ket{0}\}$ and those that are maximally unpredictable $\{\ket{+}\}$. Second, because our mixed-state coherence is the infimum of the average of the pure-state coherence (i.e.\ unpredictability) over all decompositions, the latter provides a tighter bound. In this operational picture, by choosing a more favourable ensemble, the average predictability can be increased, therefore reducing the classically irreducible part of the randomness quantified by $C(\rho, \{\Pi_a\})$.
\end{enumerate}

\subsection{Coherence as an Unpredictability Resource for Random Generation}
\label{subsec: QRNG}

Quantum random-number generation (QRNG) aims to certify randomness from intrinsically quantum measurement outcomes rather than from classical ignorance~\cite{Rarity:1994, Stefanov:1999ue, Jennewein:2000, Ma:2013, Ma:2016}. In source-independent quantum random-number generation (SI QRNG), randomness must remain certifiable even when the source is untrusted~~\cite{Cao:2016vsg, Marangon:2017sdi, Avesani:2018sdi, Michel:2019si}. Our construction captures the worst-case setting by allowing the strongest classical pure-state decomposition of $\rho$, leading to a min-entropy bound that survives source-side information.

To make this connection precise, let $\rho$ be an arbitrary $d$-dimensional quantum state and let $\{{\Pi_a}\}$ be a fixed set of projective measurement operators, with outcome probabilities $p_a = \operatorname{Tr}(\rho\Pi_a)$.
We define the variable for describing the optimal probability of single-shot guessing the measurement outcome $A$ of the state $\rho$ with respect to a set of measurement basis $\{\Pi_a\}$ as $\text{Pr}_{\text{guess}} (A)_\rho \coloneqq \max_a p_a$, which is called the \textit{optimal guessing probability}~\cite{Massey:1994, Boztas:2014, Rioul:2022}.
From this quantity, we can derive the min-entropy $H_{min}$, which quantifies the randomness in measurement outcomes, and thus acts as an important cryptographic quantity, especially in random-number generation~\cite{Koenig:2009avh, Ma:2013, Ma:2016, Herrero-Collantes:2016lbz, Turan:2018}. The min-entropy is defined as
\begin{equation}
    H_{min}(\text{Pr}(A)_\rho) \coloneqq H_{\infty} (\text{Pr}(A)_\rho) = -\log \text{Pr}_{\text{guess}} (A)_\rho.
\end{equation}
Here $H_\infty$ denotes the $\alpha\to\infty$ limit of the Rényi entropy
$H_\alpha=\frac{1}{1-\alpha}\log\sum_a \Pr(A)_\rho^\alpha$~\cite{Koenig:2009avh, Boztas:2014}.

\begin{theorem}[Worst-case classical-label source-independent min-entropy bound]
\label{thm:classical-label-source-side-bound}
Following the standard definitions of the classical conditional guessing probability
and min-entropy~\cite{Konig2009Operational, Tomamichel2017FiniteKeyLimits}, let $\rho$ be an arbitrary $d$-dimensional quantum state and let
$\{\Pi_a\}$ be a fixed projective measurement basis. Consider an
untrusted source that may realise $\rho$ through any pure-state ensemble
decomposition
$\rho=\sum_i p_i\ket{\psi_i}\bra{\psi_i}$.
If the classical preparation label $i$ is available as side information,
then the relevant label-assisted guessing probability of the measurement outcome $A$, given the classical preparation label $I$, is
\begin{equation}
\label{eq: clOptPr}
    \text{Pr}_{\rm guess}^{\rm cl}(A|I)_\rho
    \coloneqq
    \sup_{\{p_i,\ket{\psi_i}\}}
    \sum_i p_i \max_a
    \bra{\psi_i}\Pi_a\ket{\psi_i},
\end{equation}
where the supremum is taken over all pure-state decompositions of $\rho$.
The corresponding classical-label conditional min-entropy is
\begin{equation}
\label{eq: clMinEnt}
    H_{\min}^{\rm cl}(A|I)_\rho
    \coloneqq
    -\log \text{Pr}_{\rm guess}^{\rm cl}(A|I)_\rho .
\end{equation}
Then
\begin{eqnarray}
    \label{eq: Th1}
    H_{\min}^{\rm cl}(A|I)_\rho 
    &\geq& 
    H_{\min,\mathrm{LB}}^{\rm cl}(\rho,\{\Pi_a\})\nonumber\\
    &\coloneqq&
    -\log Q_d\!\left[
    1+C(\rho,\{\Pi_a\})
    \right],
\end{eqnarray}
where
\begin{equation}
    Q_d(s) = \left[ \frac{s+\sqrt{(d-s^2)(d-1)}}{d} \right]^2.
\end{equation}
This bound is valid for every state $\rho$. In particular, it remains valid
in the worst-case classical source model where the source is allowed to
choose the pure-state decomposition of $\rho$ that maximises the
label-assisted guessing probability.
\end{theorem}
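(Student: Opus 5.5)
The plan is to prove a pointwise pure-state inequality, $\max_a p_a \le Q_d\bigl(1+C(\ket{\psi},\{\Pi_a\})\bigr)$, and then lift it to mixed states through the convex-roof definition~\eqref{eq: ConRooCoh}. The lift uses two properties of $Q_d$ on the relevant interval $s\in[1,\sqrt d]$: that it is decreasing, and that it is concave.

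\emph{Step 1 (pure states).} Fix $\ket{\psi}$ with $p_a=|\langle a|\psi\rangle|^2$ and set $s=\sum_a\sqrt{p_a}$. By Eq.~\eqref{eq: KDCohPur}, $s=1+C^{NCl}_{KD}(\ket{\psi},\{\Pi_a\})\in[1,\sqrt d]$. Let $q=\max_a p_a$ and $x=\sqrt q$.

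Applying Cauchy--Schwarz to the remaining $d-1$ amplitudes gives
\[
s\le x+\sqrt{(d-1)(1-x^2)}.
\]
The right-hand side is maximised at $x=1/\sqrt d$ and is decreasing for $x\ge 1/\sqrt d$, which always holds since $q\ge 1/d$. So we can invert it. Setting $s=x+\sqrt{(d-1)(1-x^2)}$ and squaring gives the quadratic $d x^2-2sx+s^2-(d-1)=0$. Its roots are $x=\bigl[s\pm\sqrt{(d-1)(d-s^2)}\bigr]/d$, and the relevant branch ($x\ge 1/\sqrt d$) is the larger root. Hence
\[
q\le g(s)^2=Q_d(s),\qquad g(s)\coloneqq \frac{s+\sqrt{(d-1)(d-s^2)}}{d}.
\]
Equality holds when all the non-maximal probabilities are equal.

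\emph{Step 2 (monotonicity and concavity of $Q_d$).} First, $g'(s)=\bigl(1-s\sqrt{d-1}/\sqrt{d-s^2}\bigr)/d$ vanishes exactly at $s=1$ and is negative on $(1,\sqrt d]$. So $g$, and therefore $Q_d=g^2$ (since $g\ge 0$), is decreasing on $[1,\sqrt d]$.

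Second, for concavity I will show $Q_d''=2\bigl(g'^2+g\,g''\bigr)\le 0$. Here $g''=-\sqrt{d-1}\,d^{\,}\!/\bigl(d\,(d-s^2)^{3/2}\bigr)\cdot 1$ up to the obvious simplification, so $g''<0$ because $g$ is the sum of a linear function and the square root of a concave quadratic. I will substitute $s=\sqrt d\cos\phi$ so that $\sqrt{d-s^2}=\sqrt d\sin\phi$. In these coordinates $g$ becomes a trigonometric combination, and the sign of $g'^2+g g''$ reduces to an elementary inequality.

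This concavity check is the step I expect to be the main obstacle. If it fails on part of the interval, the fallback is to replace $Q_d$ by its concave envelope. I would then verify that the envelope coincides with $Q_d$ at the points relevant to the theorem, or else state the bound with the envelope.

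\emph{Step 3 (mixed states).} Take any decomposition $\rho=\sum_ip_i\ket{\psi_i}\bra{\psi_i}$ and write $s_i=1+C(\ket{\psi_i})$. By Step 1 and then Jensen's inequality (using concavity),
\[
\sum_ip_i\max_a\bra{\psi_i}\Pi_a\ket{\psi_i}\le\sum_ip_iQ_d(s_i)\le Q_d\Bigl(\sum_ip_is_i\Bigr).
\]
The convex-roof definition gives $\sum_ip_is_i\ge 1+C(\rho,\{\Pi_a\})$, because $C(\rho,\{\Pi_a\})$ is the infimum of $\sum_i p_i C(\ket{\psi_i})$. Since $Q_d$ is decreasing, this yields $Q_d\bigl(\sum_ip_is_i\bigr)\le Q_d\bigl(1+C(\rho,\{\Pi_a\})\bigr)$.

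Taking the supremum over all decompositions in Eq.~\eqref{eq: clOptPr} gives $\mathrm{Pr}^{\rm cl}_{\rm guess}(A|I)_\rho\le Q_d\bigl(1+C(\rho,\{\Pi_a\})\bigr)$. Applying $-\log$, which is decreasing, gives Eq.~\eqref{eq: Th1}. Because the supremum already ranges over every pure-state decomposition, the bound holds in the worst-case source model automatically.
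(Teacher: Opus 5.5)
Your proposal is correct and follows essentially the same route as the paper: the Cauchy--Schwarz bound $p_{\max}\le Q_d(s)$ for each pure state, $Q_d$ decreasing and concave on $[1,\sqrt d]$, Jensen's inequality, and then $\sum_i p_i s_i\ge 1+C(\rho,\{\Pi_a\})$ from the convex roof (the paper phrases this last step through $\bar s$, which equals $1+C$). The concavity step you flagged needs no fallback: the paper computes $Q_d''(s)=-\frac{2}{d^2(d-s^2)^{3/2}}\bigl[\sqrt{d-1}\,s^3+(d-2)(d-s^2)^{3/2}+3\sqrt{d-1}\,(d-s^2)\,s\bigr]$, and every bracketed term is nonnegative for $d\ge 2$ and $s\in[1,\sqrt d]$, so $Q_d''\le 0$.
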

The proof of the theorem can be found in App.~\ref{app: Unp}.

The min-entropy $H^{cl}_{\min}$ admits a lower bound $H^{cl}_{\min,\mathrm{LB}}$ where $-\log Q_d$ is monotonically increasing in the coherence, as shown in Fig. \ref{fig:cohunp}. Consequently, higher coherence yields a larger certified amount of min-entropy, strengthening the randomness available for random-number generation~\cite{Cao:2016vsg}. In contrast, increasing the  predictability decreases this lower bound, reflecting the intuition that greater predictability reduces the guaranteed randomness of the system.

\begin{figure}[!tbp]
    \centering
    \includegraphics[scale=0.6]{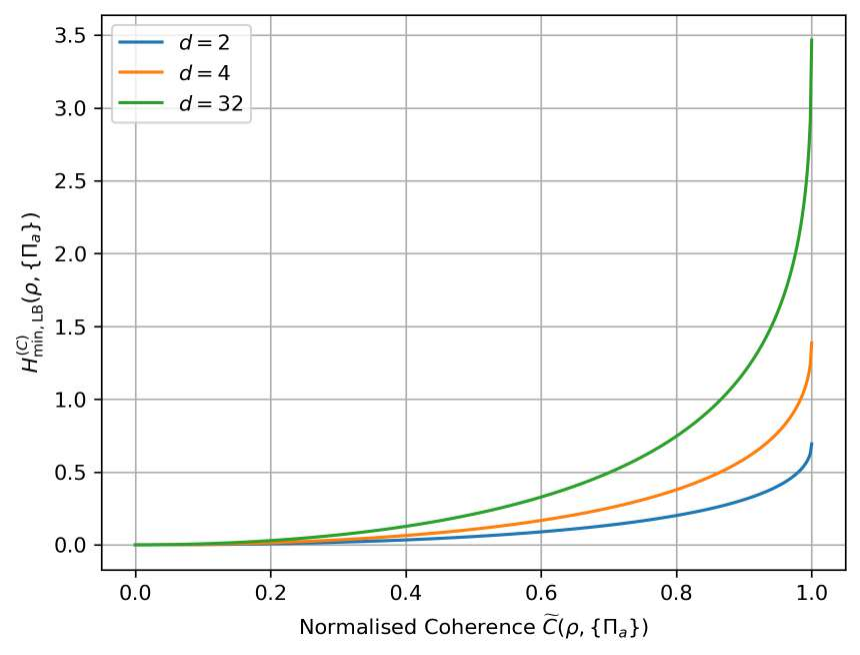}
    \caption{\justifying The lower bound of the min-entropy $H_{\mathrm{min},\mathrm{LB}}^{(C)}$ versus the normalised coherence $\widetilde{C}$, which shows that the function is monotonically increasing with respect to coherence $C$ as in the inequality \eqref{eq: Th1} and dimension $d$.}
    \label{fig:cohunp}
\end{figure}

In the present framework, 
the source can be represented as an ensemble of pure states. The decomposition label $i$ acts as classical side information in the procedure. Since guessing the outcome corresponds to selecting the most likely projector $\Pi_a$ in the fixed measurement basis $\{\Pi_a\}$, as described in Sec. \ref{subsec: QuaGame}, we are interested in the quantity $\max_a \bra{\psi_i}\Pi_a \ket{\psi_i}$ in Eq. \eqref{eq: clOptPr}.
The most conservative classical source model is therefore obtained by optimising the summation of these quantities over all ensembles compatible with $\rho$. Our convex-roof construction in Eq. \eqref{eq: PreCohInt} follows the same adversarial logic, but with the  predictability as the figure of merit.

Equivalently, one may imagine a powerful source-side agent who is able to decompose $\rho$ into the most favourable ensemble of pure states, chosen so as to maximise the average predictability of the measurement outcomes. Thus, the coherence quantifies the part of the measurement unpredictability that remains after the most favourable classical decomposition label has been revealed. The resulting min-entropy bound is therefore a worst-case bound: even after this optimal classical explanation of the source has been allowed, the remaining randomness is still certified by the existence of the coherence. This gives the measure an SI QRNG interpretation against classical source-side information: a large value of $C(\rho,\{\Pi_a\})$ certifies that the observed randomness cannot be explained purely by a hidden classical preparation label. This complements existing
coherence-based approaches to source-independent QRNG ~\cite{Ma:2019aol}.

This should be distinguished from a full composable security proof for SI QRNG, where the adversary may hold arbitrary quantum side information $E$ and the relevant quantity is the conditional min-entropy $H_{\min}(A|E)$. The present result addresses the classical-label layer of this problem, corresponding to side information $I$ about the preparation ensemble. Extending the present convex-roof construction to bounds on $H_{\min}(A|E)$ is a natural direction for future work.

\section{Summary and Outlook}

We developed a measurement-centred formulation of wave--particle complementarity for multi-path quantum systems. Starting from a fixed measurement basis, we defined the particle-like contribution through a predictability measure, given by the squared Bures distance between the dephased state and the maximally mixed state. Since the dephasing removes all phase information while preserving the outcome probabilities, this quantity depends only on the observed basis statistics. It therefore provides a geometric and experimentally accessible measure of how strongly the state favours some paths over others.

We then showed that this predictability is naturally paired with a basis-dependent coherence measure. For pure states, the normalised  predictability is exactly complementary to the normalised non-classical Kirkwood--Dirac coherence, giving a saturated wave--particle duality relation. For mixed states, extending the pure-state coherence by the convex roof leads to a corresponding inequality between predictability and coherence. In this way, the same Bures/Bhattacharyya geometry gives a unified description of the particle-like and wave-like aspects of the state: predictability measures definiteness in the chosen basis, while coherence measures the remaining quantum superposition relative to that basis.

The convex-roof construction also gives the coherence measure a direct operational meaning. It quantifies the part of the measurement unpredictability that remains even after one optimises over all pure-state decompositions of the density matrix and reveals the corresponding classical preparation label. This residual contribution can therefore be interpreted as classically irreducible randomness: randomness that cannot be explained away by classical side information about the preparation. This interpretation leads naturally to bounds on the optimal guessing probability and on the min-entropy of the measurement outcomes. In particular, it gives the coherence measure a source-independent QRNG interpretation against classical source-side information, where the remaining randomness cannot be explained purely by a hidden classical preparation label.

Several directions remain open for future work. First, it would be interesting to connect the present framework more directly to relative entropy and free-energy based formulations, in order to explore whether the predictability--coherence trade-off can be extended into a triadic relation involving a thermodynamic contribution. Such a perspective may clarify the role of quantum thermodynamics in constraining the interplay between classical definiteness and quantum superposition.
Second, it would be natural to investigate extensions involving entanglement, especially in multipartite settings where local predictability and local coherence may be limited by nonlocal quantum correlations. This may reveal a broader complementarity structure linking predictability, coherence, and entanglement.
Third, the present analysis could be generalised by explicitly incorporating detector states, in the spirit of which-path detection scenarios, so as to compare the present measurement-centred formulation with distinguishability-based approaches to wave--particle duality.
Finally, it would be worthwhile to explore possible experimental applications, particularly in quantum-optical platforms such as multi-path interferometers, integrated photonic circuits, and coherence-based random-number-generation protocols, where the quantities introduced here may be directly accessible from measurement statistics.

\acknowledgments{We would like to thank Agung Budiyono for the discussions and ideas for this research. 

%
%
%
\bibliographystyle{JHEP} 
\bibliography{refs_cleaned} 
\clearpage  
\onecolumngrid
\input{appendix}
\end{document}

%% file: appendix.tex
\section{Appendix}

\subsection{Proof for monotonicity under mixing for the Predictability}
\label{app: Pro56}

This property is proven by acknowledging the fact that the fidelity function is concave and using Jensen's inequality,
\begin{proof}
    Starting from the property of concavity for the fidelity~\cite{Nielsen:2012yss}.
    \begin{equation}
        F \left(\sum_i p_i \rho_i, \sigma \right) \geq \sum_i p_i  F \left( \rho_i, \sigma \right).
    \end{equation}
    The square root of the fidelity function also follows this property
    \begin{eqnarray}
        \sqrt{F \left(\sum_i p_i \rho_i, \sigma \right)} &\geq& \sqrt{\sum_i p_i  F \left( \rho_i, \sigma \right)}\nonumber\\
        &\geq& \sum_i p_i  \sqrt{F \left( \rho_i, \sigma \right)},\nonumber\\
        - \sqrt{F \left(\sum_i p_i \rho_i, \sigma \right)} &\leq& - \sum_i p_i  \sqrt{F \left( \rho_i, \sigma \right)}, 
    \end{eqnarray}
    where we have used the Jensen's inequality for the square root function, which is a concave function, for the second line.
    Translating this to the Bures distance function, we have
    \begin{eqnarray}
        D_B^2 \left( \sum_i p_i \rho_i \| \sigma\right) &=& 2 \left( 1 - \sqrt{F \left(\sum_i p_i \rho_i, \sigma \right)} \right)\nonumber\\
        &\leq& 2 \left( 1 - \sum_i p_i  \sqrt{F \left( \rho_i, \sigma \right)} \right)\nonumber\\
        &=& \sum_i p_i  ~2\left( 1 - \sqrt{F \left( \rho_i, \sigma \right)} \right)\nonumber\\
        &\leq& \sum_i p_i D_B^2 \left( \rho_i \| \sigma\right).
    \end{eqnarray}
Using the previous result for depolarising operation,
    \begin{eqnarray}
        P\bigl(\Lambda_\lambda(\rho)\bigr) &\leq& (1-\lambda) P(\rho) + \lambda P(I/d)\nonumber\\
        &=& (1-\lambda) P(\rho)\nonumber\\
        &\leq& P(\rho),
    \end{eqnarray}
    where we have used the fact that $P(I/d) = 0$ since it is a maximally unpredictable state.
\end{proof}

\subsection{Proof for the properties of the coherence}
\label{app: ProCoh}

In this section, we prove the properties laid out in subsection \ref{subsec: ConRoo}. For practical purposes, we write the maximum values $D_B^2(\Delta_a(\ket{\psi}) \| \Delta_a(\ket{\psi_{mc}}))$ as $D$ which has the value $D = 2\left( 1-\frac{1}{\sqrt{d}} \right)$ and the maximum value of $C(\rho, \{ \Pi_a \})$ as $C = \sqrt{d}-1$.

\subsubsection{Convexity}
\label{appsub: prop01}
    If the mixed state is decomposed into pure states (i.e. $\rho = \sum_i p_i \ket{\psi_i}\bra{\psi_i}$), then, by definition of convex–roof construction, the resulting construction must satisfy convexity since it's the infimum of such a decomposition. We provide the proof that the convex–roof construction guarantees the convexity of the coherence by its nature.
    \begin{proof}
        \label{pr: EnsConv}
        To verify convexity for mixed-state decomposition, consider the following mixed quantum state that can be decomposed into another two mixed states,
        \begin{equation}
            \label{eq: EnsConv01}
            \rho = p_1 \rho_1 + p_2 \rho_2.
        \end{equation}
        Say that $\rho_1$ and $\rho_2$ have the following optimal decompositions,
        \begin{eqnarray}
            \label{eq: EnsConv02}
            \rho_1 &=& \sum_i a_i \ket{\psi_i}\bra{\psi_i},\\
            \rho_2 &=& \sum_i b_i \ket{\phi_i}\bra{\phi_i},
        \end{eqnarray}
        The coherence, then, has the following form of its convex–roof construction,
        \begin{eqnarray}
            \label{eq: EnsConv03}
            C(\rho_1,\{\Pi_a\}) &=& \sum_i a_i C(\ket{\psi_i},\{\Pi_a\}),\\
            C(\rho_2,\{\Pi_a\}) &=& \sum_i b_i C(\ket{\phi_i},\{\Pi_a\}).
        \end{eqnarray}
        Going back to eq. \eqref{eq: EnsConv01}; we can write it as
        \begin{equation}
            \label{eq: ENsConv04}
            \rho = p_1 \sum_i a_i \ket{\psi_i}\bra{\psi_i} + p_2 \sum_i b_i \ket{\phi_i}\bra{\phi_i},
        \end{equation}
        which is a pure-state decomposition of $\rho$. However, according to the convex–roof construction, the coherence of $\rho$ is expressed as
        \begin{eqnarray}
        \label{eq: EnsConv05}
            C(\rho,\{\Pi_a\}) &\leq& p_1 \sum_i a_i C(\ket{\psi_i},\{\Pi_a\})\nonumber\\
            &&+ p_2 \sum_i b_i C(\ket{\phi_i},\{\Pi_a\})\nonumber\\
            &=& p_1 C(\rho_1,\{\Pi_a\}) + p_2 C(\rho_2,\{\Pi_a\}).
        \end{eqnarray}
        This can be easily generalised to a decomposition of $\rho$ with an arbitrary number of terms.
    \end{proof}

\subsubsection{Faithfulness}
\label{appsub: prop02}
    \begin{proof}

    In order to verify the faithfulness of our coherence, we apply it on two sets: the set of maximally coherent states $\ket{\psi_{mc}}$, which gives the maximum value, and the set of incoherent states $\rho_I$, which gives the minimum value.

    Finding the maximum value is straightforward since the set consists of pure states, thus we just need to apply the formula for pure states in eq. \eqref{eq: KDCohPurBurEle},
    \begin{eqnarray}
        \frac{C^{NCl}_{KD}(\ket{\psi_{mc}}, \{\Pi_a\})}{C} &=& 1 - \frac{D_B^2(\Delta_a(\ket{\psi_{mc}}) \| \Delta_a(\ket{\psi_{mc}}))}{D}\nonumber\\
        &=& 1,
    \end{eqnarray}
    where we have used the fact that the Bures distance between two equal states is zero. The resulting value is the maximum value of the coherence.
    
    Next, we examine the set of incoherent states $\rho_I$.
    The coherence can also be written as the following,
    \begin{eqnarray}
        \label{der: CohFai1}
        C(\rho_I, \{\Pi_a\}) 
        &=& \inf_{\{p_i, \ket{\psi_i}\}} \sum_{i} p_i C(\ket{\psi_i}, \{\Pi_a\})\nonumber\\
        &=& \inf_{\{p_i, \ket{\psi_i}\}} \sum_{i} p_i \left\{ 1 - \frac{1}{D} D_B^2(\Delta_a(\ket{\psi_i}) \| \Delta_a(\ket{\psi_{mc}}) ) \right\}\nonumber\\
        &=&1 - \frac{1}{D} \sup_{\{p_i, \ket{\psi_i}\}} \sum_{i} p_i  D_B^2(\Delta_a(\ket{\psi_i}) \| \Delta_a(\ket{\psi_{mc}}) ).\nonumber\\
    \end{eqnarray}
    In this form of the coherence, we have to find the maximum value of the decomposition of the Bures distance. By using the exact form of the incoherent state $\rho_I = \sum_a p_a \Pi_a$ as the decomposition for the convex–roof construction of the coherence, we get
    \begin{eqnarray}
        \label{der: CohFai2}
        \sum_i p_i D_B^2(\Delta_a(\ket{\psi_i}) \| \Delta_a(\ket{\psi_{mc}}) )
        &=& \sum_b p_b D_B^2(\Delta_a(\Pi_b) \| \Delta_a(\ket{\psi_{mc}}) )\nonumber\\
        &=& \left(\sum_b p_b \right) D_B^2(\Delta_a(\Pi_b) \| \Delta_a(\ket{\psi_{mc}}) )\nonumber\\
        &=& D_B^2(\Delta_a(\Pi_b) \| \Delta_a(\ket{\psi_{mc}}) )\nonumber\\
        &=& D,
    \end{eqnarray}
    where we have used the fact that the Bures distance for $\Pi_b \in \{\Pi_a\}$ is the same. This is the maximum value of the Bures distance in the numerator. Thus, we can use this decomposition for the incoherent quantum state,
    \begin{eqnarray}
        \label{der: CohFai3}
        \frac{C(\rho_I, \{\Pi_a\})}{C} &=&  1 - \frac{1}{D} D \nonumber\\
        &=& 0.
    \end{eqnarray}
    So, the coherence of the incoherent states is zero.

    For the last part of the proof, we verify the states that give maximum coherence (i.e. $C(\rho_1,\{\Pi_a\}) = C$) and zero coherence (i.e. $C(\rho_0,\{\Pi_a\}) = 0$). The condition for giving maximum coherence is for
    \begin{equation}
        \label{eq: CohFai4}
        \sup_{\{p_i, \ket{\psi_i}\}} \sum_{i} p_i  D_B^2(\Delta_a(\ket{\psi_i}) \| \Delta_a(\ket{\psi_{mc}}) ) = 0.
    \end{equation}
    This happens solely when $D_B^2(\Delta_a(\ket{\psi_i}) \| \Delta_a(\ket{\psi_{mc}}) ) = 0$ which is if $\ket{\psi_i} $ is a maximally coherent state with respect to $\{ \Pi_a \}$.    
    Likewise, the condition for zero coherence is
    \begin{equation}
        \label{eq: CohFai5}
        \sup_{\{p_i, \ket{\psi_i}\}} \sum_{i} p_i  D_B^2(\Delta_a(\ket{\psi_i}) \| \Delta_a(\ket{\psi_{mc}}) ) = D.
    \end{equation}
    This only holds if $D_B^2(\Delta_a(\ket{\psi_i}) \| \Delta_a(\ket{\psi_{mc}}) = D$ (i.e. maximum Bures distance). The only pure states that have this value are those that lie on an incoherent basis:
    \begin{equation}
        \label{eq: CohFai6}
        \ket{\psi_i}\bra{\psi_i} \in \{ \Pi_a \}.
    \end{equation}
    The resulting quantum state is inside the incoherent set,
    \begin{equation}
        \label{der: CohFai6}
        \rho = \sum_{a} p_a \ket{a}\bra{a} \in \rho_I.
    \end{equation}
    \end{proof}

\subsubsection{Non-increasing under decoherence operation}
\label{appsub: prop03}

    For a decoherence operation, we find it to exhibit non-increasing monotonicity.
    \begin{proof}
        We consider the coherence of a state after undergoing a decoherence operation.
        \begin{eqnarray}
            \label{der: DecMon}
            C(\Theta_\lambda(\rho,\{\Pi_a\})) &\leq&  \lambda C(\rho, \{ \Pi_a \}) + (1-\lambda) C(\Delta_a(\rho), \{ \Pi_a \})\nonumber\\
            &=& \lambda C(\rho, \{ \Pi_a \})\nonumber\\
            &\leq& C(\rho, \{ \Pi_a \}),
        \end{eqnarray}
        where we have used convexity in the first line.
    \end{proof}

\subsubsection{Non-increasing under partial trace}
\label{appsub: prop04}
    We prove the following property under partial trace for our coherence,
    \begin{equation}
        \label{eq: ParTra}
        C(\rho_{12}, \{\Pi_a \otimes I\}) \geq C(\rho_{1}, \{\Pi_a\}),
    \end{equation}
    where $\rho_1 = \text{Tr}_2~ \rho_{12} $ which is the state after partial trace of the product state.
    
    \begin{proof}
        \label{pr: ParTra}
        Consider the optimum decomposition of $\rho_{12}$,
        \begin{equation}
            \rho_{12} = \sum_{k} p_k \ket{\Psi_k}_{12} \bra{\Psi_k},
        \end{equation}
        We have the partial trace of $\rho_{12}$
        \begin{equation}
            \rho_1 = \Tr_2 \rho_{12}.
        \end{equation}
        If we define
        \begin{equation}
            \sigma_1^k \coloneqq \Tr_2 \left( \ket{\Psi_k}_{12} \bra{\Psi_k} \right),
        \end{equation}
        we also have
        \begin{equation}
        \label{eq: rhok}
            \rho_1 = \sum_k p_k \sigma^k_1.
        \end{equation}
        The coherence for one of the pure states in the decomposition can be expressed as
        \begin{eqnarray}
        \label{eq: parTra}
            C\left( \ket{\Psi_k}_{12} \bra{\Psi_k}, \{ \Pi_a \otimes I_2 \} \right) &=& -1 + \sum_a \sqrt{p_a(\Psi)}\nonumber\\
            &=& R(\sigma_1^k, \{ \Pi_a \}),
        \end{eqnarray}
        where we define
        \begin{equation}
            R(\rho, \{ \Pi_a \}) \coloneqq -1 + \sum_a \sqrt{\Tr\left( \rho \Pi_a \right) }
        \end{equation} for the last line and
        \begin{eqnarray}
            p_a(\Psi) &\coloneqq& \Tr \left[ (\Pi_a \otimes I_2) \ket{\Psi_k}_{12} \bra{\Psi_k} \right]\nonumber\\
            &=& \Tr \left[ (\Pi_a \otimes I_2) \ket{\Psi_k}_{12} \bra{\Psi_k} (\Pi_a \otimes I_2) \right]\nonumber\\
            &=& \sum_{bc} ({}_1\bra{b} \Pi_a \otimes   {}_2\bra{c}) \ket{\Psi_k}_{12} \bra{\Psi_k} (\ket{b}_1 \Pi_a \otimes   \ket{c}_2)\nonumber\\
            &=& \sum_b {}_1\bra{b}\Pi_a \Tr_2 \left( \ket{\Psi_k}_{12}\bra{\Psi_k} \right) \Pi_a \ket{b}\nonumber\\
            &=& \sum_b {}_1\bra{b}\Pi_a \sigma_1^k\Pi_a \ket{b}\nonumber\\
            &=& \Tr_1 \left( \sigma^k_1 \Pi_a \right).
        \end{eqnarray}
        From the relation in eq. \eqref{eq: EnsConv05}, we can get for mixed states,
        \begin{eqnarray}
        \label{eq: conApp}
            C(\sigma_1^k, \Pi_a) &\leq& R(\sigma_1^k, \{ \Pi_a \}).
        \end{eqnarray}
        Finally, from the convex-roof construction nature and eq. \eqref{eq: rhok}, we have
        \begin{eqnarray}
        \label{eq: conroofApp}
            C(\rho_1, \{\Pi_a\}) \leq \sum_k C(\sigma_1^k, \{\Pi_a\}).
        \end{eqnarray}
        Using the results from eqs. \eqref{eq: ParTra}, \eqref{eq: conApp} and \eqref{eq: conroofApp}, we have
        \begin{eqnarray}
            C(\rho_{12}, \{ \Pi_a \otimes I_2 \}) &=& \sum_k p_k C(\ket{\Psi_k}_{12} \bra{\Psi_k}, \{ \Pi_a \otimes I_2 \})\nonumber\\
            &=& \sum_k p_k R(\sigma_1^k, \{ \Pi_a \})\nonumber\\
            &\geq& \sum_k p_k C(\sigma_1^k, \{ \Pi_a \})\nonumber\\
            &\geq& C(\rho_1, \{ \Pi_a \}).
        \end{eqnarray}
    \end{proof}

\subsubsection{Unitarily covariant}
\label{appsub: prop05}

    \begin{proof}
        \label{pr: UniCov}
        In order to prove unitary covariance, first, consider the effect of the unitary operation on the dephased state,
        \begin{eqnarray}
            \label{der: UniCovA}
            \Delta(U\ket{\psi_i} \| \{U\Pi_a U^\dagger\}) &=& \sum_a U\Pi_a U^\dagger U\ket{\psi_i}\bra{\psi_i}U^\dagger U\Pi_a U^\dagger\nonumber\\
            &=& U \sum_a \Pi_a \ket{\psi_i}\bra{\psi_i}\Pi_a U^\dagger\nonumber\\
            &=& U \Delta(\ket{\psi_i} \| \{\Pi_a \}) U^\dagger\nonumber\\
            &=& U \Delta_a(\ket{\psi_i}) U^\dagger.
        \end{eqnarray}
        The resulting dephased state also transforms as unitarily. Substituting this to the RHS of the definition of the coherence, we get
        \begin{eqnarray}
            \label{der: UniCovB}
            \sum_i p_i C(U\ket{\psi_i}\bra{\psi_i}U^\dagger, \{ U\Pi_a U^\dagger \})
            &=& \sum_i p_i C \left( 1- \frac{1}{D} D_B^2(\Delta(U\ket{\psi_i} \| \{U\Pi_a U^\dagger\}) \| \Delta_a(\ket{\psi_{mc}}) \right)\nonumber\\
            &=& \sum_i p_i C \left( 1- \frac{1}{D} D_B^2 \left(U\Delta_a(\ket{\psi_i})U^\dagger \| U\Delta_a(\ket{\psi_{mc}})U^\dagger \right) \right)\nonumber\\
            &=& \sum_i p_i C \left( 1- \frac{1}{D} D_B^2(\Delta_a(\ket{\psi_i}) \| \Delta_a(\ket{\psi_{mc}}) \right)\nonumber\\
            &=& \sum_i p_i C(\ket{\psi_i}\bra{\psi_i}, \{ \Pi_a \}),
        \end{eqnarray}
        where we have used the unitary invariant property of Fidelity in the third line.
        Using the above result, we can confirm the property,
        \begin{eqnarray}
            \label{der: UniCovC}
            C(U\rho U^\dagger, \{ U\Pi_a U^\dagger \})
            &=& \inf_{\{p_i, \ket{\psi_i}\}} \sum_i p_i C(U\ket{\psi_i}\bra{\psi_i}U^\dagger, \{ U\Pi_a U^\dagger \})\nonumber\\
            &=& \inf_{\{p_i, \ket{\psi_i}\}} \sum_i p_i C(\ket{\psi_i}\bra{\psi_i}, \{ \Pi_a \})\nonumber\\
            &=& C(\rho, \{ \Pi_a \}).
        \end{eqnarray}
    \end{proof}

\subsubsection{Invariant under unitary transformation that commutes with a Hermitian operator $A$ that has $\{\Pi_a\}$ as the eigenvectors (i.e. $A = \sum_a a \ket{a}\bra{a}$ with $a \in \mathbb{R}$).}
\label{appsub: prop06}

    \begin{proof}
        \label{pr: UniInv}
        Since $U$ commutes with $A$, we can write the unitary operation as
        \begin{equation}
            U\ket{a} = e^{i\theta_a} \ket{a},
        \end{equation}
        where $\theta_a \in \mathbb{R}$. Thus, when applied $\Pi_a$ to the unitary operation $U$, we have
        \begin{equation}
            \Pi_a U = e^{i\theta_a} \Pi_a.
        \end{equation}
        The effect of this to the dephased state is
        \begin{eqnarray}
            \Delta(U\ket{\psi_i} \| \{\Pi_a\}) &=& \sum_a \Pi_a  U\ket{\psi_i}\bra{\psi_i}U^\dagger \Pi_a \nonumber\\
            &=& \sum_a e^{i\theta_a} \Pi_a \ket{\psi_i}\bra{\psi_i}\Pi_a e^{-i\theta_a}\nonumber\\
            &=& \Delta_a(\ket{\psi_i}).
        \end{eqnarray}
        We can plug this into the RHS of the definition of the coherence in any decomposition,
        \begin{eqnarray}
            \label{der: UniInvA}
            \sum_i p_i C(U\ket{\psi_i}\bra{\psi_i}U^\dagger, \{ \Pi_a \})
            &=& \sum_i p_i C\left( 1- \frac{1}{D} D_B^2(\Delta_a(U\ket{\psi_i}) \| \Delta_a(\ket{\psi_{mc}}) \right)\nonumber\\
            &=& \sum_i p_i C\left( 1- \frac{1}{D} D_B^2(\Delta_a(\ket{\psi_i}) \| \Delta_a(\ket{\psi_{mc}}) \right)\nonumber\\
            &=& \sum_i p_i C(\ket{\psi_i}\bra{\psi_i}, \{ \Pi_a \}).
        \end{eqnarray}
        Using the above result, we can confirm this invariance,
        \begin{eqnarray}
            \label{der: UniInvB}
            C(U\rho U^\dagger, \{ \Pi_a \}) &=& \inf_{\{p_i, \ket{\psi_i}\}} \sum_i p_i C(U\ket{\psi_i}\bra{\psi_i}U^\dagger, \{ \Pi_a \})\nonumber\\
            &=& \inf_{\{p_i, \ket{\psi_i}\}} \sum_i p_i C(\ket{\psi_i}\bra{\psi_i}, \{ \Pi_a \})\nonumber\\
            &=& C(\rho, \{ \Pi_a \}).
        \end{eqnarray}
    \end{proof}

\subsubsection{Invariant under permutation of the incoherent basis}
\label{appsub: prop07}
\begin{proof}
    Consider an operation that permutes the incoherent basis up to a phase,
    \begin{equation}
        \phi(\ket{a}) = e^{i\theta_a} \ket{\mu(a)},
    \end{equation}
    where $\theta_a \in \mathbb{R}$ and $\mu(a)$ does a systematic permutation to the incoherent basis.

    Next, we consider the fidelity of an arbitrary pure state in this basis. First, the pure state can generally be written in the incoherent basis,
    \begin{equation}
        \ket{\psi_i}\bra{\psi_i} = \sum_{ab} p_{ab} \ket{a}\bra{b},
    \end{equation}
    with $p_{ab} \in \mathbb{R}$ and satisfies the constraints of being the entries of a density matrix. The dephased state, thus, can be written as
    \begin{eqnarray}
        \label{der: InvPer1}
        \Delta_a (\ket{\psi_i}\bra{\psi_i}) &=& \sum_{a} \Pi_a \ket{\psi_i}\bra{\psi_i} \Pi_a\nonumber\\
        &=& \sum_{abc} p_{bc} \ket{a}\bra{a} \ket{b}\bra{c} \ket{a}\bra{a}\nonumber\\
        &=& \sum_{a} p_{aa} \ket{a}\bra{a}.
    \end{eqnarray}
    Plugging this into the fidelity, we have
    \begin{eqnarray}
        \label{der: InvPer2}
        \sqrt{F}(\Delta_a(\ket{\psi_i}\bra{\psi_i}) \| I/d) &=& \frac{1}{\sqrt{d}}\text{tr} \sqrt{\Delta_a(\ket{\psi_i}\bra{\psi_i})}\nonumber\\
        &=& \frac{1}{\sqrt{d}}\sum_{a} \sqrt{p_{aa}}.
    \end{eqnarray}

    Now, consider permuting the pure state using the permutation operation $\phi$,
    \begin{eqnarray}
        \label{der: InvPer3}
        \phi(\ket{\psi_i}\bra{\psi_i}) &=& \sum_{ab} p_{ab}~ \phi(\ket{a}\bra{b})\nonumber\\
        &=& \sum_{ab} p_{ab}~ e^{i(\theta_a-\theta_b)} \ket{\mu(a)}\bra{\mu(b)}.
    \end{eqnarray}
    We can insert this to the dephasing operation in order to find the resulting dephased state,
    \begin{eqnarray}
        \label{der: InvPer4}
        \Delta_a(\phi(\ket{\psi_i}\bra{\psi_i})) &=& \sum_{abc} p_{bc}~ e^{i\theta_b-i\theta_c} \braket{a}{\mu(b)}\braket{\mu(c)}{a} \ket{a}\bra{a}\nonumber\\
        &=& \sum_{a} p_{\mu^{-1}(a),\mu^{-1}(a)} \ket{a}\bra{a}.
    \end{eqnarray}
    The fidelity for this state becomes
    \begin{eqnarray}
        \label{der: InvPer5}
        \sqrt{F}(\Delta_a(\phi(\ket{\psi_i}\bra{\psi_i})) \| I/d) &=&  \text{tr} \sqrt{\Delta_a(\phi(\ket{\psi_i}\bra{\psi_i}))}\nonumber\\
        &=& \frac{1}{\sqrt{d}} \sum_{a} \sqrt{ p_{\mu^{-1}(a),\mu^{-1}(a)}}\nonumber\\
        &=& \frac{1}{\sqrt{d}} \sum_{a} \sqrt{p_{aa}}\nonumber\\        
        &=& \sqrt{F}(\Delta_a(\ket{\psi_i}\bra{\psi_i}) \| I/d),\nonumber\\
    \end{eqnarray}
    where we have used the fact that permuting the order of the summation doesn't change the result of the summation itself in the third line. We have 
    \begin{eqnarray}
        D_B^2(\Delta_a(\phi(\ket{\psi_i}\bra{\psi_i})) \| \Delta_{a}(\ket{\psi_{mc}}))
        &=& D_B^2(\Delta_a(\ket{\psi_i}\bra{\psi_i}) \| \Delta_{a}(\ket{\psi_{mc}})),
    \end{eqnarray}
    which results in a preserved coherence for pure states.
    \begin{equation}
        C(\phi(\ket{i}\bra{i}), \{\Pi_a\}) = C(\ket{i}\bra{i}, \{\Pi_a\}).
    \end{equation}

    Lastly, consider if we have a mixed quantum density matrix $\rho$ with an optimum decomposition $\rho = \sum_{i} p_i \ket{i}\bra{i}$. We can write the coherence to be
    \begin{equation}
         C(\rho, \{\Pi_a\}) = \sum_{i} p_i C(\ket{i}\bra{i}, \{\Pi_a\}).
    \end{equation}
    Since the permuting operation doesn't change the value of the coherence for any decomposition, the optimal decomposition doesn't change, and we can still use it for the coherence,
    \begin{eqnarray}
        \label{der: InvPer6}
        C(\phi(\rho), \{\Pi_a\}) &=& \sum_{i} p_i C(\phi(\ket{i}\bra{i}), \{\Pi_a\})\nonumber\\
        &=& \sum_{i} p_i C(\ket{i}\bra{i}, \{\Pi_a\})\nonumber\\
        &=& C(\rho, \{\Pi_a\}).
    \end{eqnarray}
    \end{proof}

\subsection{Proof for the wave-particle duality relation for mixed states}
\label{app: Wpd}
\begin{proof}
    First, we begin by evaluating the relation between the Bures distance of the dephased arbitrary state and the maximally mixed state with its form in the convex–roof construction decompositions, where it is a function of a decomposition of pure states of the arbitrary state,
    \begin{eqnarray}
        \label{der: ConvRof}
        &\sum_{i}& p_i D_B^2 (\Delta_a(\ket{\psi_i}) \| \Delta_a(\ket{\psi_{mc}}))\nonumber\\
        &=& \sum_{i} p_i ~2 \left(1-\sqrt{F} (\Delta_a(\ket{\psi_i}) \| \Delta_a(\ket{\psi_{mc}})) \right)\nonumber\\
        &=& \sum_{i} p_i ~2 \left(1- \frac{1}{\sqrt{d}} \text{Tr}\sqrt{\Delta_a(\ket{\psi_i})} \right)\nonumber\\
        &=& 2 \left(1- \frac{1}{\sqrt{d}} \sum_{a} \sum_{i} p_i \sqrt{\text{Tr}\left(\Pi_a\ket{\psi_i}\bra{\psi_i}\Pi_a\right)}\right)\nonumber\\
        &\geq& 2 \left(1- \frac{1}{\sqrt{d}} \sum_{a} \sqrt{ \sum_{i} p_i \text{Tr} (\Pi_a\ket{\psi_i}\bra{\psi_i}\Pi_a)} \right) \nonumber\\
        &=& 2 \left(1- \frac{1}{\sqrt{d}} \sum_{a} \sqrt{ \text{Tr} (\Pi_a\rho\Pi_a)} \right) \nonumber\\
        &=& 2 \left(1- \frac{1}{\sqrt{d}} \text{Tr}\sqrt{\Delta_a(\rho)} \right)\nonumber\\
        &=& 2 \left(1-\sqrt{F} (\Delta_a(\rho) \| \Delta_a(\ket{\psi_{mc}})) \right)\nonumber\\
        &=& D_B^2 (\Delta_a(\rho) \| \Delta_a(\ket{\psi_{mc}})),
    \end{eqnarray}
    where we have used Jensen's inequality in the fourth line. Since the Bures distance for any decomposition of the arbitrary quantum state is always larger than or equal to the Bures distance of the original state, the supremum of the decomposition is also larger than or equal to the version of the original state. This result is of great importance to us because our coherence is the infimum of the complement of the normalised expression for the decomposition of the Bures distance. With a little bit of thought, we can infer the following relation,
    \begin{eqnarray}
        \label{eq: ConvRofCoh}
        &&\frac{C(\rho, \{\Pi_a\})}{\sqrt{d}-1}\nonumber\\
        &=& \inf_{\{ p_i, \ket{\psi_i} \}} 1 - \frac{\sum_{i} p_i D_B^2 (\Delta_a(\ket{\psi_i}) \| \Delta_a(\ket{\psi_{mc}}))}{2\left( 1-\frac{1}{\sqrt{d}} \right)}\nonumber\\
        &\leq& 1 - \frac{D_B^2 (\Delta_a(\rho) \| \Delta_a(\rho_{mc}))}{2\left( 1-\frac{1}{\sqrt{d}} \right)},
    \end{eqnarray}
    which leads to the wave–particle duality relation expressed above,
    \begin{equation}
        \label{eq: ConvRofCohEl}
        \frac{C(\rho, \{\Pi_a\})}{\sqrt{d}-1} + \frac{D_B^2 (\Delta_a(\rho) \| \Delta_a(\rho_{mc}))}{2\left( 1-\frac{1}{\sqrt{d}} \right)} \leq 1.
    \end{equation}
\end{proof}

\subsection{Proof for the relation between $H_{min}$ with Predictability $P$ and Coherence $C$}
\label{app: Unp}
\begin{proof}

First, rewrite the quantities in the following way,
\begin{equation}
    \widetilde{C}(\rho, \{\Pi_a\}) \coloneqq 1 - \bar{P}(\rho, \{\Pi_a\}),
\end{equation}
where $\widetilde{C}(\rho, \{\Pi_a\}) = \frac{C(\rho, \{\Pi_a\})}{\max_{\rho} C(\rho, \{\Pi_a\})}$ is the normalised coherence and
\begin{equation}
    \bar{P}(\rho, \{\Pi_a\}) \coloneqq \sup_{\{p_i, \ket{\psi_i}\}} \sum_{i} p_i \widetilde{P}(\ket{\psi_i}, \{\Pi_a\}),
\end{equation}
is the supremum of the averaged normalised predictability $\widetilde{P}(\ket{\psi_i}, \{\Pi_a\}) = \frac{P(\ket{\psi_i}, \{\Pi_a\})}{\max_{\ket{\psi_i}}P(\ket{\psi_i}, \{\Pi_a\})}$ with respect to the decomposition $\rho = \sum_i p_i \ket{\psi_i}$. On the other hand, it's simpler to write the normalised predictability $\widetilde{P}(\rho, \{\Pi_a\})$ as
\begin{equation}
    \widetilde{P}(\rho, \{\Pi_a\}) \coloneqq \frac{\sqrt{d} - s}{\sqrt{d} - 1},
\end{equation}
where
\begin{equation}
    s \coloneqq \sum_a \sqrt{\Tr{ \left(\Pi_a \rho\right)}} = \sum_a \sqrt{p_a}.
\end{equation}
Now, we introduce a new quantity $\bar{s}$ such that
\begin{equation}
    \bar{P}(\rho, \{\Pi_a\}) \coloneqq \frac{\sqrt{d} - \bar{s}}{\sqrt{d} - 1}.
\end{equation}

Using eq. \eqref{eq: WavParBurCoh} , we can find the relation between $s$ and $\bar{s}$, which is
\begin{eqnarray}
    \widetilde{C}(\rho, \{\Pi_a\}) &\leq& 1 - \widetilde{P}(\rho, \{\Pi_a\}),\nonumber\\
    1 - \bar{P}(\rho, \{\Pi_a\}) &\leq& 1 - \widetilde{P}(\rho, \{\Pi_a\}),\nonumber\\
    \bar{P}(\rho, \{\Pi_a\}) &\geq& \widetilde{P}(\rho, \{\Pi_a\}),\nonumber\\
    \label{eq: srel}
    \bar{s} &\leq& s.
\end{eqnarray}

Next, we find the relation between $s$ and the maximum probability $p_{max} \in \{p_a\}$. $p_{max}$ is the optimal guessing probability for $\{\Pi_a\}$, $\max_{a} \text{Pr}(A)_\rho$. We start by decomposing $s$ in terms of the maximum probability and the other probabilities,
\begin{equation}
    \label{eq: s}
    s = \sqrt{p_{max}} + \sum_{a \neq max} \sqrt{p_a}.
\end{equation}
Using Cauchy-Schwarz inequality for the sum of the probabilities in the second term, we have the following relation,
\begin{equation}
    \left(\sum_{a \neq max} \sqrt{p_a}\right)^2 \leq (d-1) \left( \sum_{a \neq max} p_a \right).
\end{equation}
Substituting eq. \eqref{eq: s} into the inequality, we will have the following calculation.

\begin{eqnarray}
\label{eq: pmaxQd}
    \left( s - \sqrt{p_{max}} \right)^2 &\leq& (d-1) (1-p_{max}),\nonumber\\
    s^2 - 2 s \sqrt{p_{max}} + p_{max} &\leq& d - 1 - d ~p_{max} + p_{max},\nonumber\\
    s^2 - 2 s \sqrt{p_{max}} &\leq& d - 1 - d ~p_{max},\nonumber\\
    p_{max} - 2 \frac{s}{d} \sqrt{p_{max}} &\leq& \frac{d - 1 -s^2}{d},\nonumber\\
    \left( \sqrt{p_{max}} - \frac{s}{d} \right)^2 - \frac{s^2}{d^2} &\leq& \frac{d - 1 -s^2}{d},\nonumber\\
    \left( \sqrt{p_{max}} - \frac{s}{d} \right)^2 &\leq& \frac{s^2 + d^2 - d - ds^2}{d^2}\nonumber\\
    &=& \frac{(d-s^2)(d-1)}{d^2},\nonumber\\
    \sqrt{p_{max}} &\leq& \frac{ s + \sqrt{(d-s^2)(d-1)}}{d},\nonumber\\
    p_{max} &\leq& \left[ \frac{ s + \sqrt{(d-s^2)(d-1)}}{d} \right]^2,\nonumber\\
\end{eqnarray}
where we have used the fact that $1/d \leq s/d \leq 1/\sqrt{d} \leq \sqrt{p_{max}} \leq 1$ for the eighth line.

Now define a function for the LHS of the equation $Q_d(s)$,
\begin{equation}
    Q_d(s) \coloneqq \left[ \frac{ s + \sqrt{(d-s^2)(d-1)}}{d} \right]^2.
\end{equation}
By finding the derivative of the function $Q_d(s)$ with respect to $s$,
\begin{eqnarray}
    \frac{dQ_d(s)}{ds} &=& 2 \frac{ s + \sqrt{(d-s^2)(d-1)}}{d^2} \left( 1 - s \sqrt{\frac{d-1}{d-s^2}} \right)\nonumber\\
    &\leq& 0,
\end{eqnarray}
we know that this function is monotonically non-increasing in $s \in [1, \sqrt{d}]$. This means that because of the eq. \eqref{eq: srel}, the following relation holds,
\begin{equation}
    Q_d(s) \leq Q_d(\bar{s}).
\end{equation}
The second derivative of the function,
\begin{eqnarray}
    \frac{d^2Q_d(s)}{ds^2} &=& - \frac{2}{d^2  (d-s^2)^{3/2}} \Big[ \sqrt{d-1} ~s^3 + (d-2) (d-s^2)^{3/2}\nonumber\\
    &&~~ + 3 \sqrt{d-1} ~(d-s^2) ~s  \Big]\nonumber\\
    &\leq& 0,
\end{eqnarray}
shows that the function $Q_d(s)$ is concave.

Consider the term inside the summation of the label-assisted guessing probability $\text{Pr}^{cl}_{guess} (A|I)_\rho$ in eq.~\eqref{eq: clOptPr},
\begin{eqnarray}
    \max_a \bra{\psi_i} \Pi_a \ket{\psi_i} &=& \max_a \Tr \left( \ket{\psi_i} \bra{\psi_i} \Pi_a \right)\nonumber\\
    &=& \max_a p_a(\ket{\psi_i})\nonumber\\
    &=& p_{max}(\ket{\psi_i}).
\end{eqnarray}
This is just the same maximum optimal guessing probability as introduced previously,

\begin{eqnarray}
    \sum_i p_i ~p_{max}(\ket{\psi_i}) &\leq& \sum_i p_i ~Q_d(s(\ket{\psi_i}))\nonumber\\
    &\leq& Q_d\left( \sum_i p_i s(\ket{\psi_i}) \right)\nonumber\\
    &\leq& Q_d(\bar s(\rho)),
\end{eqnarray}
where we have used eq.~\eqref{eq: pmaxQd} in the first line, the concavity of $Q_d(s)$ in the second line and the fact that $\sum_i p_i s(\ket{\psi_i}) \geq \bar s(\rho)$ in last line. Since the previous result holds for any decomposition of $\rho$, then
\begin{eqnarray}
    \text{Pr}^{cl}_{guess}(A|I)_\rho &=& \sup_{\{ p_i, \ket{\psi_i}\}} \sum_i p_i ~p_{max}(\ket{\psi_i})\nonumber\\
    &\leq& Q_d(\bar s).
\end{eqnarray}
Therefore, the classical-label conditional min-entropy  has the following bound,
\begin{eqnarray}
    H^{cl}_{min}(A|I)_\rho\nonumber
    &=& - \log \text{Pr}^{cl}_{guess}(A|I)_\rho\nonumber\\
    &\geq& -\log{Q_d(\bar{s})}\nonumber\\
    &=& -\log{Q_d\left(1+C(\rho, \{\Pi_a\})\right)}.\nonumber\\
\end{eqnarray}

\end{proof}